\newtheorem{theorem}{Theorem}
\newtheorem{lemma}{Lemma}
\newtheorem{proposition}{Proposition}
\newtheorem{remark}{Remark}
\begin{document}

\title[Discrete Boltzmann Equation for Anyons]{Discrete Boltzmann Equation for Anyons}

\author[N. Bernhoff]{Niclas Bernhoff}
\address{N.B. Department of Mathematics and Computer Science, Karlstad University, Universitetsgatan 2, 65188 Karlstad, Sweden}
\email{niclas.bernhoff@kau.se}

\keywords{anyons, Haldane statistics, discrete Boltzmann equation, trend to equilibrium}

\maketitle

\textbf{Abstract:} A semi-classical approach to the study of the evolution of anyonic excitations--elementary particles with fractional statistics, complementing bosons and fermions--is through the Boltzmann equation for anyons. This work reviews a discretized version--a system of partial differential equations--of such a quantum equation. Trend to equilibrium is studied for a planar stationary system, as well as the spatially homogeneous system. Essential properties of the linearized operator are proven, implying that results for general steady half-space problems for the discrete Boltzmann equation in a slab geometry can be applied. 

\section{Introduction}

In quantum mechanics the elementary particles, quantum particles, are (at
least, traditionally) either bosons or fermions, if one consider a space of
three (or more) dimensions. Nevertheless, in a space of dimension two (or
one), there are also other possibilities, as was first noted by Leinaas and
Myrheim \cite{LM-77}. Those latter quantum particles, obeying a fractional
statistics, were by Wilczek \cite{Wi-82} named anyons. In 1928 Nordheim
presented the Nordheim-Boltzmann equation \cite{No-28}, a semi-classical
quantum Boltzmann equation for bosons and fermions, also known as the
Uehling-Uhlenbeck equation \cite{UU-33} in literature. In 1995, Bhaduri,
Bhalerao, and Murthy generalized the Nordheim-Boltzmann equation for bosons
and fermions, to yield also for particles obeying Haldane statistics \cite%
{Ha-91}, or fractional exclusion statistics, by a suitable modification \cite%
{BBM-96}. Mathematical studies of this equation has been conducted in, e.g., 
\cite{Ar-13, AN-15, AN-19}. In this paper we review a general discrete model
of Boltzmann equation for anyons--or Haldane statistics--already addressed
in a shorter presentation in \cite{Be-19}.

The equation is introduced in Sect.\ref{S2}. The equilibrium distributions
are characterized in Sect.\ref{S2.2}, and in Sect.\ref{S3} trend to
equilibrium is shown in spatially homogeneous case in Sect.\ref{S3.2} and
planar stationary case in Sect.\ref{S3.1}. The linearized collision operator
is considered in Sect.\ref{S4}, and basic important properties of it are
proven in Sect.\ref{S4.1}.

The main extensions to the previous work \cite{Be-19} are the inclusions of
a proof of Theorem \ref{T7} and of Lemma \ref{L1}--stating the uniqueness of
the equilibrium distribution for given moments--sharpening the statement of
Theorem \ref{T8}--proved analogously to Theorem \ref{T7}.

\section{Discrete\ Boltzmann\ equation\ for\ Haldane\ statistics \label{S2}}

The discrete Boltzmann equation for anyons--or, particles obeying Haldane
statistics--reads \cite{Be-19} 
\begin{equation}
\frac{\partial F_{i}}{\partial t}+\mathbf{p}_{i}\cdot \nabla _{\mathbf{x}%
}F_{i}=Q_{i}^{\alpha }(F)\text{ for }i\in \left\{ 1,...,N\right\} 
\label{ln1}
\end{equation}%
for some real number $\alpha \in \left( 0,1\right) $ and given finite set $%
\mathcal{P}=\left\{ \mathbf{p}_{1},...,\mathbf{p}_{N}\right\} \subset 
\mathbb{R}^{d}$, where $F=(F_{1},...,F_{N})$, with components $%
F_{i}=F_{i}\left( \mathbf{x},t\right) =F\left( \mathbf{x},\mathbf{p}%
_{i},t\right) $ restricted by $0<F_{i}<\dfrac{1}{\alpha }$, is the
distribution function of the particles. For generality, the mathematical
results obtained here are stated for any dimension $d$. The limiting cases $%
\alpha =0$ (without any upper bound on $F_{i}$) corresponding to the
discrete Nordheim-Boltzmann equation for bosons, and $\alpha =1$
corresponding to the discrete Nordheim-Boltzmann equation for fermions can
be included as well, see \cite{Be-17,Be-23g}. Here it is assumed that the
gas is rarefied--imposing only binary interactions between particles to be
considered--and the lack of external forces. Note that vanishing and
saturated states--i.e., $F_{i}=0$ and $F_{i}=\dfrac{1}{\alpha }$ ($\alpha
\neq 0$) for some $i\in \left\{ 1,...,N\right\} $--are excluded due to
technical reasons, that will be addressed further below in Remark \ref{Rem2}-%
\ref{Rem3}.

\begin{remark}
\label{N1}Below we apply the following convention: for a function $g=g(%
\mathbf{p})$ (possibly depending on more variables than $\mathbf{p}$), we
identify $g$ with its restrictions to the points $\mathbf{p}\in \mathcal{P}$%
, i.e.,%
\begin{equation*}
g=\left( g_{1},...,g_{N}\right) ,\text{ where }g_{1}=g\left( \mathbf{p}%
_{1}\right) ,...,g_{N}=g\left( \mathbf{p}_{N}\right) \text{.}
\end{equation*}
\end{remark}

\subsection{Collision operator \label{S2.1}}

The collision operators $Q_{i}^{\alpha }(F)$ are for $i\in \left\{
1,...,N\right\} $ given by 
\begin{equation}
Q_{i}^{\alpha }\left( F\right) =\sum\limits_{j,k,l=1}^{N}\Gamma
_{ij}^{kl}\left( F_{k}F_{l}\Psi _{\alpha }\left( F_{i}\right) \Psi _{\alpha
}(F_{j})-F_{i}F_{j}\Psi _{\alpha }\left( F_{k}\right) \Psi _{\alpha }\left(
F_{l}\right) \right)  \label{l2}
\end{equation}%
where the filling factor $\Psi _{\alpha }\left( y\right) $ is given by 
\begin{equation*}
\Psi _{\alpha }\left( y\right) =\left( 1-\alpha y\right) ^{\alpha }\left(
1+\left( 1-\alpha \right) y\right) ^{1-\alpha }\text{.}
\end{equation*}%
It is assumed that the collision coefficients $\Gamma _{ij}^{kl}$ satisfy
the symmetry relations, due to indistinguishability of particles and
microreversibility, 
\begin{equation}
\Gamma _{ij}^{kl}=\Gamma _{ji}^{kl}=\Gamma _{kl}^{ij}\geq 0
\label{l6}
\end{equation}%
for any indices $\left\{ i,j,k,l\right\} \subset \left\{ 1,...,N\right\} $;
the collision coefficients vanishing, unless the conservation laws%
\begin{equation}
\mathbf{p}_{i}+\mathbf{p}_{j}=\mathbf{p}_{k}+\mathbf{p}_{l}\text{ and }%
\left\vert \mathbf{p}_{i}\right\vert ^{2}+\left\vert \mathbf{p}%
_{j}\right\vert ^{2}=\left\vert \mathbf{p}_{k}\right\vert ^{2}+\left\vert 
\mathbf{p}_{l}\right\vert ^{2}  \label{l3}
\end{equation}%
are satisfied, imposing conservation of momentum and kinetic energy under
interactions of particles. Also mass--or, the number of particles--is
trivially conserved due to form of the collision operator $\left( \ref{l2}%
\right) $. For (the limiting cases) bosons ($\alpha =0$) and fermions ($%
\alpha =1$), the classical filling factors%
\begin{equation*}
\Psi _{0}\left( y\right) =1+y\text{ and }\Psi _{1}\left( y\right) =1-y\text{,%
}
\end{equation*}%
respectively, are recovered, and, e.g., for semions ($\alpha =1/2$) the
filling factor becomes%
\begin{equation*}
\Psi _{1/2}\left( y\right) =\sqrt{1-\frac{y^{2}}{4}}.
\end{equation*}

Denote by $\left\langle \cdot ,\cdot \right\rangle $ the standard scalar
product in $\mathbb{R}^{N}$. Due to symmetry relations $\left( \ref{l6}\right) $, we
have the following proposition for the weak form 
\begin{equation}
\left\langle H,Q^{\alpha }\left( F\right) \right\rangle
=\sum\limits_{i,j,k,l=1}^{N}\Gamma _{ij}^{kl}H_{i}\left( F_{k}F_{l}\Psi
_{\alpha }\left( F_{i}\right) \Psi _{\alpha }(F_{j})-F_{i}F_{j}\Psi _{\alpha
}\left( F_{k}\right) \Psi _{\alpha }\left( F_{l}\right) \right)  \label{l4b}
\end{equation}%
of the collision operator.

\begin{proposition}
For any function $H=H(\mathbf{p})$ expression $\left( \ref{l4b}\right) $ can
be recast as%
\begin{multline}
\left\langle H,Q^{\alpha }\left( F\right) \right\rangle =\frac{1}{4}%
\sum\limits_{i,j,k,l=1}^{N}\Gamma _{ij}^{kl}\left(
H_{i}+H_{j}-H_{k}-H_{l}\right)  \\
\times \left( F_{k}F_{l}\Psi _{\alpha }\left( F_{i}\right) \Psi _{\alpha
}(F_{j})-F_{i}F_{j}\Psi _{\alpha }\left( F_{k}\right) \Psi _{\alpha }\left(
F_{l}\right) \right) .  \label{l4c}
\end{multline}
\end{proposition}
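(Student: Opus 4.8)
The plan is to obtain $\left(\ref{l4c}\right)$ from the one-term weak form $\left(\ref{l4b}\right)$ by the standard symmetrization of Boltzmann-type operators, using the two symmetries in $\left(\ref{l6}\right)$: invariance of $\Gamma_{ij}^{kl}$ under the transposition $i\leftrightarrow j$ and under the exchange $\left(i,j\right)\leftrightarrow\left(k,l\right)$. It is convenient to abbreviate the bracketed factor by
\begin{equation*}
B_{ij}^{kl}:=F_{k}F_{l}\Psi _{\alpha }\left( F_{i}\right) \Psi _{\alpha }(F_{j})-F_{i}F_{j}\Psi _{\alpha }\left( F_{k}\right) \Psi _{\alpha }\left( F_{l}\right) ,
\end{equation*}
and to record two structural facts, both immediate from the form of $B_{ij}^{kl}$: it is symmetric under $i\leftrightarrow j$ (and under $k\leftrightarrow l$), since each of the products $F_{k}F_{l}\Psi _{\alpha }(F_{i})\Psi _{\alpha }(F_{j})$ and $F_{i}F_{j}\Psi _{\alpha }(F_{k})\Psi _{\alpha }(F_{l})$ is; and it is antisymmetric under $\left(i,j\right)\leftrightarrow\left(k,l\right)$, i.e. $B_{kl}^{ij}=-B_{ij}^{kl}$.

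First I would relabel the dummy pair $i\leftrightarrow j$ in $\left(\ref{l4b}\right)$; since the sum is finite and runs over all of $\left\{1,\dots,N\right\}^{4}$ this is merely a reordering of terms, and combining it with $\Gamma_{ij}^{kl}=\Gamma_{ji}^{kl}$ and $B_{ij}^{kl}=B_{ji}^{kl}$ reproduces the same sum with $H_{i}$ replaced by $H_{j}$. Averaging the two identical expressions gives
\begin{equation*}
\left\langle H,Q^{\alpha }\left( F\right) \right\rangle =\frac{1}{2}\sum_{i,j,k,l=1}^{N}\Gamma _{ij}^{kl}\left( H_{i}+H_{j}\right) B_{ij}^{kl}.
\end{equation*}
Next I would relabel $\left(i,j\right)\leftrightarrow\left(k,l\right)$ in this last sum; using $\Gamma_{ij}^{kl}=\Gamma_{kl}^{ij}$ together with the antisymmetry $B_{kl}^{ij}=-B_{ij}^{kl}$ turns it into $-\tfrac12\sum \Gamma_{ij}^{kl}\left(H_{k}+H_{l}\right)B_{ij}^{kl}$. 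Averaging these two representations of $\left\langle H,Q^{\alpha }\left( F\right) \right\rangle$ then produces exactly the factor $\tfrac14\left(H_{i}+H_{j}-H_{k}-H_{l}\right)$ and hence $\left(\ref{l4c}\right)$.

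There is no real analytic obstacle: everything is a finite sum, so the index relabelings are unconditionally legitimate and no convergence or integrability issue arises. The only point requiring care is the bookkeeping — checking that the two products constituting $B_{ij}^{kl}$ genuinely transform as claimed under the two index swaps, so that each symmetry of $\Gamma_{ij}^{kl}$ in $\left(\ref{l6}\right)$ can be shifted onto the $H$-factor with the correct sign. Once this is verified, the two successive symmetrizations described above yield the collision-invariant form $\left(\ref{l4c}\right)$, from which it follows in particular that $\left\langle H,Q^{\alpha }\left( F\right) \right\rangle =0$ whenever $H_{i}+H_{j}=H_{k}+H_{l}$ on every collision configuration admitted by $\left(\ref{l3}\right)$.
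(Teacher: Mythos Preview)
Your argument is correct and is exactly the standard symmetrization that the paper has in mind: the paper does not give a detailed proof, but simply attributes the proposition to the symmetry relations $\left(\ref{l6}\right)$, and your two relabelings $i\leftrightarrow j$ and $\left(i,j\right)\leftrightarrow\left(k,l\right)$ are precisely the use of those relations. There is nothing to add.
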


\subsection{Collision invariants and equilibrium distributions \label{S2.2}}

A collision invariant is a function $\phi =\phi \left( \mathbf{p}\right) $,
such that%
\begin{equation}
\phi _{i}+\phi _{j}=\phi _{k}+\phi _{l}  \label{c9c}
\end{equation}%
for all indices $\left\{ i,j,k,l\right\} \subset \left\{ 1,...,N\right\} $
such that $\Gamma _{ij}^{kl}\neq 0$. Trivially--by conservation of mass,
momentum, and kinetic energy--the set of collision invariants include all
functions of the form 
\begin{equation}
\phi =a+\mathbf{b\cdot p}+c\left\vert \mathbf{p}\right\vert ^{2}  \label{l5}
\end{equation}%
for some $a,c\in \mathbb{R}$ and $\mathbf{b}\in \mathbb{R}^{d}$. Note that
by Remark $\ref{N1}$ and in correspondence with relations $\left( \ref{c9c}%
\right) $ the collision invariants $\phi =\phi (\mathbf{p})$ given in $%
\left( \ref{l5}\right) $ are vectors.

In general, in the discrete case, there can be also so called spurious--or,
"non-physical"--collision invariants. This is a common problem for different
kinds of velocity/momentum models, cf. \cite{BV-08}; if there are not enough
of admissible collisions, unwanted quantities $\phi =\phi \left( \mathbf{p}%
\right) $ will be invariant under interactions-- the most trivial case: with
no admissible collisions at all, all functions $\phi =\phi \left( \mathbf{p}%
\right) $ will be collision invariants. In fact, to obtain only the desired
set of collision invariants, there must be a set of $N-p$--here $p$ denotes
the number of desired collision invariants--independent admissible
collisions, i.e., collisions with non-zero collision coefficients, that can
not be obtained by any chain of other collisions in the set (or their
reversion). Discrete models without spurious collision invariants are called
normal and methods of their construction have been extensively studied, see
for example \cite{BC-99,BV-08,BV-16} and references therein. Consider
below--even if this restriction is not necessary in the general
context--only normal discrete models. That is, consider discrete models
without spurious collision invariants, i.e., any collision invariant is of
the form $\left( \ref{l5}\right) $. For normal discrete models the equation%
\begin{equation}
\left\langle Q^{\alpha }\left( F\right) ,\phi \right\rangle =0  \label{c4a}
\end{equation}%
has the general solution $\left( \ref{l5}\right) $.

With $H=\log \dfrac{F}{\Psi _{\alpha }\left( F\right) }$ in expression $%
\left( \ref{l4c}\right) $, we can recast the expression to%
\begin{multline}
\left\langle \log \frac{F}{\Psi _{\alpha }\left( F\right) },Q^{\alpha
}\left( F\right) \right\rangle =\frac{1}{4}\sum\limits_{i,j,k=1}^{N}\Gamma
_{ij}^{kl}\Psi _{\alpha }\left( F_{i}\right) \Psi _{\alpha }(F_{j})\Psi
_{\alpha }\left( F_{k}\right) \Psi _{\alpha }\left( F_{l}\right)  \\
\times \log \frac{F_{i}F_{j}\Psi _{\alpha }\left( F_{k}\right) \Psi _{\alpha
}\left( F_{l}\right) }{F_{k}F_{l}\Psi _{\alpha }\left( F_{i}\right) \Psi
_{\alpha }(F_{j})}\left( \frac{F_{k}}{\Psi _{\alpha }\left( F_{k}\right) }%
\frac{F_{l}}{\Psi _{\alpha }\left( F_{l}\right) }-\frac{F_{i}}{\Psi _{\alpha
}\left( F_{i}\right) }\frac{F_{j}}{\Psi _{\alpha }(F_{j})}\right) \leq 0%
\text{.}  \label{c17e}
\end{multline}%
Inequality $\left( \ref{c17e}\right) $ is obtained by the relation 
\begin{equation}
\left( z-y\right) \log \frac{y}{z}\leq 0  \label{c6}
\end{equation}%
for all positive numbers $y$ and $z$, where it is actual equality if and
only if $y=z$. It follows that there is equality in inequality $\left( \ref%
{c17e}\right) $ if and only if%
\begin{equation}
\frac{F_{i}}{\Psi _{\alpha }\left( F_{i}\right) }\frac{F_{j}}{\Psi _{\alpha
}(F_{j})}=\frac{F_{k}}{\Psi _{\alpha }\left( F_{k}\right) }\frac{F_{l}}{\Psi
_{\alpha }\left( F_{l}\right) }  \label{c17f}
\end{equation}%
for all indices $\left\{ i,j,k,l\right\} \subset \left\{ 1,...,N\right\} $
such that $\Gamma _{ij}^{kl}\neq 0$.

A Maxwellian distribution--or, Maxwellian--is a function $M=M(\mathbf{p})$
of the form 
\begin{equation*}
M=e^{-\phi }=Ke^{-\mathbf{b\cdot p}-c\left\vert \mathbf{p}\right\vert ^{2}}%
\text{, with }K=e^{-a}>0\text{, }
\end{equation*}%
or, equivalently, 
\begin{equation*}
M_{i}=e^{-\phi _{i}}=Ke^{-\mathbf{b\cdot p}_{i}-c\left\vert \mathbf{p}%
_{i}\right\vert ^{2}}\ \text{for }i\in \left\{ 1,...,N\right\} \text{,}
\end{equation*}%
where $\phi =\left( \phi _{1},...,\phi _{N}\right) $ is a collision
invariant $\left( \ref{l5}\right) $. There is equality in inequality $\left( %
\ref{c17e}\right) $ if and only if $\log \dfrac{F}{\Psi _{\alpha }\left(
F\right) }$ is a collision invariant--noted by taking the logarithm of
equality $\left( \ref{c17f}\right) $--or, equivalently, if and only if $%
\dfrac{F}{\Psi _{\alpha }\left( F\right) }$ is a Maxwellian $M$, i.e., the
equilibrium distributions $P$ are given by the transcendental equation, see 
\cite{Wu-94} for the continuous case, 
\begin{equation}
\dfrac{P}{\Psi _{\alpha }\left( P\right) }=M\text{.}  \label{c14b}
\end{equation}

\begin{proposition}
\label{Prop2}The equilibrium distributions of system $\left( \ref{ln1}%
\right) $ are characterized by system $\left( \ref{c14b}\right) $.
\end{proposition}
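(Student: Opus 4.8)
The plan is to establish the two implications in the claimed characterization separately; both amount to organizing identities already obtained above. Throughout, call $P$ an \emph{equilibrium} if $Q^{\alpha}(P)=0$, and note that $0<P_{i}<1/\alpha$ makes each factor of $\Psi _{\alpha }\left( P_{i}\right) =\left( 1-\alpha P_{i}\right) ^{\alpha }\left( 1+\left( 1-\alpha \right) P_{i}\right) ^{1-\alpha }$ strictly positive, so that $\log \dfrac{P}{\Psi _{\alpha }\left( P\right) }$ is well defined. The one computation used in both directions is the factorization
\begin{equation*}
F_{k}F_{l}\Psi _{\alpha }\left( F_{i}\right) \Psi _{\alpha }(F_{j})-F_{i}F_{j}\Psi _{\alpha }\left( F_{k}\right) \Psi _{\alpha }\left( F_{l}\right) =\Psi _{\alpha }\left( F_{i}\right) \Psi _{\alpha }(F_{j})\Psi _{\alpha }\left( F_{k}\right) \Psi _{\alpha }\left( F_{l}\right) \left( \tfrac{F_{k}}{\Psi _{\alpha }(F_{k})}\tfrac{F_{l}}{\Psi _{\alpha }(F_{l})}-\tfrac{F_{i}}{\Psi _{\alpha }(F_{i})}\tfrac{F_{j}}{\Psi _{\alpha }(F_{j})}\right) ,
\end{equation*}
which is already implicit in $\left( \ref{c17e}\right) $.

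For ``$\left( \ref{c14b}\right) \Rightarrow $ equilibrium'', suppose $\dfrac{P}{\Psi _{\alpha }\left( P\right) }=M=e^{-\phi }$ for a Maxwellian $M$, i.e. $\phi $ of the form $\left( \ref{l5}\right) $. Whenever $\Gamma _{ij}^{kl}\neq 0$, the conservation laws $\left( \ref{l3}\right) $ force $\phi _{i}+\phi _{j}=\phi _{k}+\phi _{l}$, hence $M_{i}M_{j}=M_{k}M_{l}$, which is exactly $\left( \ref{c17f}\right) $. Plugging this into the factorization shows that every summand of $Q_{i}^{\alpha }(P)$ in $\left( \ref{l2}\right) $ vanishes -- trivially when $\Gamma _{ij}^{kl}=0$, and because the bracketed difference is zero otherwise -- so $Q_{i}^{\alpha }(P)=0$ for all $i$.

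For the converse, let $P$ be an equilibrium. Pairing with $H=\log \dfrac{P}{\Psi _{\alpha }\left( P\right) }$ gives $\left\langle \log \dfrac{P}{\Psi _{\alpha }\left( P\right) },Q^{\alpha }\left( P\right) \right\rangle =0$. In $\left( \ref{c17e}\right) $ each summand equals $\Gamma _{ij}^{kl}$ times a positive quantity times $\left( z-y\right) \log \dfrac{y}{z}$, with $y=\dfrac{P_{i}}{\Psi _{\alpha }\left( P_{i}\right) }\dfrac{P_{j}}{\Psi _{\alpha }(P_{j})}$ and $z=\dfrac{P_{k}}{\Psi _{\alpha }\left( P_{k}\right) }\dfrac{P_{l}}{\Psi _{\alpha }\left( P_{l}\right) }$, hence is nonpositive by $\left( \ref{c6}\right) $; the sum vanishing forces each summand to vanish, and the equality case of $\left( \ref{c6}\right) $ then yields $\left( \ref{c17f}\right) $ for every $\left\{ i,j,k,l\right\} $ with $\Gamma _{ij}^{kl}\neq 0$. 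Taking logarithms in $\left( \ref{c17f}\right) $ shows that $\log \dfrac{P}{\Psi _{\alpha }\left( P\right) }$ satisfies $\left( \ref{c9c}\right) $, i.e. is a collision invariant; since the model is normal, it is of the form $\left( \ref{l5}\right) $, and exponentiating gives $\dfrac{P}{\Psi _{\alpha }\left( P\right) }=e^{-\phi }=M$, a Maxwellian, which is $\left( \ref{c14b}\right) $.

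The whole argument is essentially assembly, so I do not expect a genuine obstacle; the two places that deserve a line of care are the passage -- in the first direction -- from $\left( \ref{c17f}\right) $ to the \emph{pointwise} identities $Q_{i}^{\alpha }(P)=0$ (which is precisely what the factorization is for, the weak form alone not sufficing), and the use of the normality hypothesis to upgrade ``$\log \dfrac{P}{\Psi _{\alpha }\left( P\right) }$ is a collision invariant'' to the explicit shape $\left( \ref{l5}\right) $.
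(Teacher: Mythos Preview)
Your proof is correct and follows essentially the same route as the paper: the argument preceding Proposition~\ref{Prop2} in the text already establishes that equality in $\left( \ref{c17e}\right) $ is equivalent to $\left( \ref{c17f}\right) $, and hence---via the normality assumption---to $\left( \ref{c14b}\right) $; you have simply made the two implications and the role of the factorization more explicit than the paper does. In particular, your remark that the pointwise vanishing $Q_{i}^{\alpha }(P)=0$ (not just the weak form) follows from $\left( \ref{c17f}\right) $ through the factorization is a helpful clarification of a step the paper leaves implicit.
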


Note that, by solving system $\left( \ref{c14b}\right) $, for bosons ($%
\alpha =0$) and fermions ($\alpha =1$), it is found that the equilibrium
distributions are the Planckians 
\begin{equation*}
P=\frac{M}{1-M}\text{ and }P=\frac{M}{1+M}\text{,}
\end{equation*}%
respectively. Moreover, for semions ($\alpha =1/2$) it renders in the
equilibrium distribution%
\begin{equation*}
P=\frac{2M}{\sqrt{4+M^{2}}}\text{.}
\end{equation*}

\begin{remark}
\label{Rem2}The exclusion of vanishing and saturated states is required for
the categorization of equilibrium distributions in Proposition $\ref{Prop2}$%
. In the continuous case, saturated states--in form of step distributions,
being vanishing, i.e., equal to zero, above and saturated, i.e., equal to $%
\dfrac{1}{\alpha }$, below a microscopic energy treshold--might as in the
case of fermions appear for anyons as well \cite{Wu-94}. However, in the
discrete case, the general categorization of the equilibrium distributions
is not straightforward due to the finiteness of allowed collisions. In fact,
allowing vanishing states, would not only add the trivial equilibrium
distribution $M=0$, but also--unlike in the continuous case--other
non-trivial equilibrium distributions. For example, letting a chosen set of
components to be zero, may even a few components of the equilibrium
distribution to be chosen arbitrarily (below the upper bound for $\alpha
\neq 0$). In fact, any pair of components $F_{i}$ and $F_{j}$, such that $%
\Gamma _{ij}^{kl}=0$ for all $\left\{ k,l\right\} \in \left\{
1,...,N\right\} \diagdown \left\{ i,j\right\} $ (such pairs exist for any
finite set $\mathcal{P}$), can be chosen arbitrarily (possibly also more
components), while the rest being zero. 

The general equilibrium distributions (with vanishing and saturated states)
are given by   
\begin{equation}
F_{k}F_{l}\Psi _{\alpha }\left( F_{i}\right) \Psi _{\alpha
}(F_{j})=F_{i}F_{j}\Psi _{\alpha }\left( F_{k}\right) \Psi _{\alpha }\left(
F_{l}\right)   \label{c17g}
\end{equation}%
for all indices $\left\{ i,j,k,l\right\} \subset \left\{ 1,...,N\right\} $
such that $\Gamma _{ij}^{kl}\neq 0$. Here a particular relation $\left( \ref%
{c17g}\right) $ may either be of the form zero equals zero, or, otherwise it
is equivalent to the corresponding relation $\left( \ref{c17f}\right) $.
There might very well be combinations of those two alternatives for a
general equilibrium distribution, complicating a general classification of
equilibrium distributions. 
\end{remark}

\section{$\mathcal{H}$-functional(s)\ and\ trend\ to\ equilibrium \label{S3}}

This section concerns the trend to equilibrium in two particular cases: the
planar stationary case and the spatially homogeneous case.

\subsection{Planar stationary system \label{S3.1}}

Introduce a modified $\mathcal{H}$-functional%
\begin{equation*}
\widetilde{\mathcal{H}}[F]=\widetilde{\mathcal{H}}[F](x)=\sum%
\limits_{i=1}^{N}p_{i}^{1}\mu (F_{i}(x)),
\end{equation*}%
where, cf. \cite{AN-15}, 
\begin{equation}
\mu (y)=y\log y+\left( 1-\alpha y\right) \log \left( 1-\alpha y\right)
-\left( 1+\left( 1-\alpha \right) y\right) \log \left( 1+\left( 1-\alpha
\right) y\right) .  \label{c15}
\end{equation}%
Note that%
\begin{equation}
\mu ^{\prime }(y)=\log \frac{y}{\Psi _{\alpha }\left( y\right) }\text{ and }%
\mu ^{\prime \prime }(y)=\frac{1}{y\left( 1-\alpha y\right) \left( 1+\left(
1-\alpha \right) y\right) }>0\text{.}  \label{d1}
\end{equation}

Any solution to the planar stationary system%
\begin{equation}
B\dfrac{dF}{dx}=Q^{\alpha }\left( F\right) \text{, where }B=\text{\textrm{%
diag}}(p_{1}^{1},...,p_{N}^{1})\text{, with }x\in \mathbb{R}_{+}\text{,}
\label{e1}
\end{equation}%
satisfies the inequality 
\begin{equation}
\frac{d}{dx}\widetilde{\mathcal{H}}[F]=\sum\limits_{i=1}^{N}p_{i}^{1}\frac{%
dF_{i}}{dx}\log \frac{F_{i}}{\Psi _{\alpha }\left( F_{i}\right) }%
=\left\langle \log \frac{F}{\Psi _{\alpha }\left( F\right) },Q^{\alpha
}\left( F\right) \right\rangle \leq 0\text{,}  \label{h1}
\end{equation}%
with equality in inequality $\left( \ref{h1}\right) $ if and only if $F$ is
an equilibrium distribution $\left( \ref{c14b}\right) $. Introduce the fluxes%
\begin{equation}
\left\{ 
\begin{array}{l}
\widetilde{j}_{1}=\left\langle B\mathbf{1},F\right\rangle \\ 
\widetilde{j}_{i+1}=\left\langle Bp^{i},F\right\rangle \text{ for }i\in
\left\{ 1,...,d\right\} \\ 
\widetilde{j}_{d+2}=\left\langle B\left\vert \mathbf{p}\right\vert
^{2},F\right\rangle%
\end{array}%
.\right.  \label{c3}
\end{equation}%
Applying relation $\left( \ref{c4a}\right) $ to system $\left( \ref{e1}%
\right) $, implies that the fluxes $\widetilde{j}_{1},...,\widetilde{j}%
_{d+2} $ are independent of $x$ in the planar stationary case. For fixed
numbers $\widetilde{j}_{1},...,\widetilde{j}_{d+2}$ denote by $\mathbb{P}$
the manifold of all equilibrium distributions $F=P$--given by equation $%
\left( \ref{c14b}\right) $--with fluxes $\left( \ref{c3}\right) $. The
following theorem can be proved by arguments similar to the ones used for
the discrete Boltzmann equation in \cite{CIPS-88,BB-07}.

\begin{theorem}
\label{T7}Let $F=F(x)$ be a bounded solution to system $\left( \ref{e1}%
\right) $, and assume that there exists a number $\eta >0$, such that $\eta
\leq F_{i}(x)\leq \dfrac{1}{\alpha }-\eta $ for all $i\in \left\{
1,...,N\right\} $. Then 
\begin{equation*}
\underset{x\rightarrow \infty }{\lim }\mathrm{dist}(F(x),\mathbb{P})=0,
\end{equation*}%
where $\mathbb{P}$ is the manifold of equilibrium distributions with the
same fluxes $\left( \ref{c3}\right) $ as $F$. If there are only finitely
many equilibrium distributions in $\mathbb{P}$, then there is an equilibrium
distribution $P$ in $\mathbb{P}$, such that $\underset{x\rightarrow \infty }{%
\lim }F(x)=P$.
\end{theorem}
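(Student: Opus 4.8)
The plan is to run a LaSalle-type $\omega$-limit argument anchored on the entropy-dissipation inequality $\left(\ref{h1}\right)$, following the scheme used for the discrete Boltzmann equation in \cite{CIPS-88,BB-07}. Since the model is normal and $B$ is invertible (as is standard for such slab problems, $p_i^1\neq 0$ for all $i$), system $\left(\ref{e1}\right)$ is the autonomous system $\dfrac{dF}{dx}=B^{-1}Q^{\alpha}(F)$ with smooth right-hand side on the open box $\left(0,\tfrac1\alpha\right)^N$. By hypothesis the orbit $\{F(x):x\ge 0\}$ lies in the compact set $\mathcal{K}=[\eta,\tfrac1\alpha-\eta]^N$, on which $Q^{\alpha}$ is bounded; hence $\dfrac{dF}{dx}$ is bounded and $F$ is globally Lipschitz on $\mathbb{R}_{+}$, in particular uniformly continuous.

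Next I would establish that $\widetilde{\mathcal{H}}[F](x)$ has a finite limit: it is non-increasing by $\left(\ref{h1}\right)$ and, because $\mu$ is continuous on $[\eta,\tfrac1\alpha-\eta]$ and $p_1^1,\dots,p_N^1$ are fixed, it is bounded below along the orbit. So $\widetilde{\mathcal{H}}[F](x)\to\mathcal{H}_{\infty}$, and the nonnegative dissipation $D(F(x)):=-\tfrac{d}{dx}\widetilde{\mathcal{H}}[F](x)=-\langle\log(F/\Psi_{\alpha}(F)),Q^{\alpha}(F)\rangle$ is integrable on $\mathbb{R}_{+}$. Moreover the fluxes $\left(\ref{c3}\right)$ are constant along $F$ by $\left(\ref{c4a}\right)$ and depend continuously on $F$.

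The core step is the compactness argument. Fix any $x_n\to\infty$; by compactness of $\mathcal{K}$ some subsequence satisfies $F(x_n)\to F_{\infty}\in\mathcal{K}$. Consider the shifts $F^{n}(x):=F(x+x_n)$, which solve the same autonomous system and are uniformly bounded and uniformly Lipschitz; by Arzel\`a--Ascoli and a diagonal argument a further subsequence converges, uniformly on compact subsets of $\mathbb{R}_{+}$, to a solution $\overline{F}$ with $\overline{F}(0)=F_{\infty}$ and values in $\mathcal{K}$. For each fixed $x$ one has $\widetilde{\mathcal{H}}[F](x+x_n)\to\mathcal{H}_{\infty}$, and since $\widetilde{\mathcal{H}}$ is continuous in $F$ this gives $\widetilde{\mathcal{H}}[\overline{F}](x)\equiv\mathcal{H}_{\infty}$, hence $\tfrac{d}{dx}\widetilde{\mathcal{H}}[\overline{F}]\equiv 0$. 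By the equality case of $\left(\ref{h1}\right)$---equivalently $\left(\ref{c14b}\right)$---every $\overline{F}(x)$, and in particular $F_{\infty}$, is an equilibrium distribution; it has the same fluxes as $F$ by the previous paragraph, so $F_{\infty}\in\mathbb{P}$ (which in particular shows $\mathbb{P}\neq\varnothing$). Thus every accumulation point of $F(x)$ as $x\to\infty$ lies in $\mathbb{P}$, and since the orbit remains in the compact set $\mathcal{K}$ this forces $\mathrm{dist}(F(x),\mathbb{P})\to0$.

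Finally, if $\mathbb{P}$ is finite, choose $\delta$ less than half the minimal pairwise distance among its points; for $x$ large, $F(x)$ belongs to the $\delta$-neighbourhood of exactly one $P\in\mathbb{P}$, and by continuity of $F$ and connectedness of $[X,\infty)$ this $P$ is the same for all large $x$, whence $F(x)\to P$. The main obstacle is the compactness step---one must genuinely justify the equicontinuity of the shifted solutions and that the limiting trajectory inherits constancy of $\widetilde{\mathcal{H}}$; this is exactly where the a priori bound $\eta\le F_i(x)\le\tfrac1\alpha-\eta$ is indispensable, since it keeps the orbit away from the boundary of the box, where $\mu'$ and the collision term degenerate. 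Everything else is assembling pieces already furnished by $\left(\ref{h1}\right)$, $\left(\ref{c14b}\right)$ and $\left(\ref{c4a}\right)$.
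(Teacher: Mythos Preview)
Your argument is correct and is the classical LaSalle/$\omega$-limit route: shift the trajectory, use Arzel\`a--Ascoli to extract a limit solution $\overline{F}$, observe that $\widetilde{\mathcal{H}}[\overline{F}]$ is constant, and conclude via the equality case of $\left(\ref{h1}\right)$ that $\overline{F}(0)=F_{\infty}\in\mathbb{P}$. The paper reaches the same conclusion by a more hands-on contradiction argument that avoids function-space compactness altogether: assuming a sequence $y_{s}\to\infty$ with $\mathrm{dist}(F(y_{s}),\mathbb{P})\geq\delta_{1}$, it thickens the $y_{s}$ to disjoint intervals $I_{s}=[y_{s}-\epsilon_{2},y_{s}+\epsilon_{2}]$ on which $\mathrm{dist}(F,\mathbb{P})\geq\delta_{1}/2$ (using only that $dF/dx$ is bounded), notes that $\sum_{s}\int_{I_{s}}D\leq\int_{0}^{\infty}D<\infty$ forces $\int_{I_{s}}D\to0$, picks mean-value points $z_{s}\in I_{s}$ with $D(F(z_{s}))\to0$, and then extracts a Bolzano--Weierstrass limit $G$ of $F(z_{s})$ in $\mathbb{R}^{N}$; continuity of $D$ gives $D(G)=0$, hence $G\in\mathbb{P}$, contradicting $\mathrm{dist}(F(z_{s}),\mathbb{P})\geq\delta_{1}/2$. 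Your approach is conceptually cleaner and gives slightly more (the entire $\omega$-limit set consists of equilibria), but needs Arzel\`a--Ascoli plus passage to the limit in the ODE; the paper's argument never leaves $\mathbb{R}^{N}$ and uses nothing beyond integrability of the dissipation and pointwise compactness. Both proofs tacitly require $p_{i}^{1}\neq0$ so that $dF/dx=B^{-1}Q^{\alpha}(F)$ is bounded on $\mathcal{K}$, which you state explicitly and the paper uses without comment.
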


\begin{proof}
(cf. \cite{CIPS-88,BB-07}) \textit{The function }$F$\textit{\ is bounded,
and so the derivative }$\dfrac{dF}{dx}$\textit{\ is bounded.} Moreover, the
functional $\widetilde{\mathcal{H}}[F]=\widetilde{\mathcal{H}}[F](x)$ is
bounded--$\mu (y)$ is continuous, non-positive, and bounded below, since $%
\underset{y\rightarrow 0^{+}}{\lim }\mu (y)=0=\mu (0)$ as well as $\underset{%
y\rightarrow \left( 1/\alpha \right) ^{-}}{\lim }\mu (y)=0=\mu (\dfrac{1}{%
\alpha })$, while $\mu ^{\prime }(y)$ is strictly increasing for $0<y<\dfrac{%
1}{\alpha }$--and differentiable in $\mathbb{R}_{+}$. Hence, the limit $%
\underset{x\rightarrow \infty }{\lim }\widetilde{\mathcal{H}}[F]$ exists and
is finite. Consequently, 
\begin{equation*}
\int\limits_{0}^{\infty }\frac{d}{dx}\widetilde{\mathcal{H}}[F]~dx=\underset{%
x\rightarrow \infty }{\lim }\widetilde{\mathcal{H}}[F](x)-\widetilde{%
\mathcal{H}}[F](0)\leq 0
\end{equation*}%
is a finite non-positive number. We want to show that%
\begin{equation*}
\mathrm{dist}(F\left( x_{s}\right) ,\mathbb{P})\rightarrow 0\text{ as }%
s\rightarrow \infty
\end{equation*}%
for any increasing sequence $\left\{ x_{s}\right\} _{s=1}^{\infty }$ of
positive real numbers, such that $x_{s}\rightarrow \infty $ as $s\rightarrow
\infty $. We assume that the assertion is false. Then there are positive
numbers $\epsilon _{1}>0$ and $\delta _{1}>0$, and an increasing sequence $%
\left\{ y_{s}\right\} _{s=1}^{\infty }$ of positive real numbers, such that $%
y_{s+1}-y_{s}\geq \epsilon _{1}$ and \textrm{dist}$(F\left( y_{s}\right) ,%
\mathbb{P})\geq \delta _{1}$. The derivative of $F$ is bounded in $\mathbb{R}%
_{+}$, and therefore, there is a positive number $\epsilon _{2}>0$, such
that $\epsilon _{2}<\dfrac{\epsilon _{1}}{2}$ and \textrm{dist}$(F\left(
x\right) ,\mathbb{P})\geq \dfrac{\delta _{1}}{2}$ if $x\in
I_{s}=[y_{s}-\epsilon _{2},y_{s}+\epsilon _{2}]$ for some $s\in \left\{
1,2,...\right\} $.

We denote 
\begin{equation*}
\Lambda (I_{s})=-\int\limits_{y_{s}-\epsilon _{2}}^{y_{s}+\epsilon _{2}}%
\frac{d}{dx}\widetilde{\mathcal{H}}[F](x)~dx=\widetilde{\mathcal{H}}%
[F](y_{s}-\epsilon _{2})-\widetilde{\mathcal{H}}[F](y_{s}+\epsilon _{2})\geq
0\text{\ for }s\in \left\{ 1,2,...\right\} .
\end{equation*}%
The positive series $\sum\limits_{s=1}^{\infty }\Lambda (I_{s})$ is bounded,
since%
\begin{equation*}
\sum\limits_{s=1}^{\infty }\Lambda (I_{s})\leq -\int\limits_{0}^{\infty }%
\frac{d}{dx}\widetilde{\mathcal{H}}[F]~dx\text{,}
\end{equation*}%
and, hence, the series converges. Therefore, $\Lambda (I_{s})\rightarrow 0$
as $s\rightarrow \infty $, and there must be numbers $z_{s}\in I_{s}$ for $%
s\in \left\{ 1,2,...\right\} $, such that $\dfrac{d}{dx}\widetilde{\mathcal{H%
}}[F](z_{s})\rightarrow 0$ as $s\rightarrow \infty $. The sequence $\left\{
F(z_{s})\right\} _{s=1}^{\infty }$ is bounded, and, hence, by the
Bolzano-Weierstrass theorem, we can extract a subsequence $\left\{
F(z_{s_{r}})\right\} _{r=1}^{\infty }$ such that $\underset{r\rightarrow
\infty }{\lim }F(z_{s_{r}})=G$ exists. Then 
\begin{equation*}
\left\langle \log \frac{G}{\Psi _{\alpha }\left( G\right) },Q\left(
G,G\right) \right\rangle =\,\underset{r\rightarrow \infty }{\lim }%
\left\langle \log \frac{F(z_{s_{r}})}{\Psi _{\alpha }\left(
F(z_{s_{r}})\right) },Q\left( F,F\right) (z_{s_{r}})\right\rangle =\,%
\underset{r\rightarrow \infty }{\lim }\dfrac{d}{dx}\widetilde{\mathcal{H}}%
[F](z_{s_{r}})=0\text{,}
\end{equation*}%
and, hence, $G$ must be an equilibrium distribution $\left( \ref{c14b}%
\right) $. Clearly, $G$ has the same invariant fluxes $\left( \ref{c3}%
\right) $ as $F$, and therefore belongs to $\mathbb{P}$. This is a
contradiction, since \textrm{dist}$(F\left( z_{s_{r}}\right) ,\mathbb{P}%
)\geq \dfrac{\delta _{1}}{2}$ for all $r\in \left\{ 1,2,...\right\} $.
Hence, 
\begin{equation}
\mathrm{dist}(F\left( x\right) ,\mathbb{P})\rightarrow 0\text{ as }%
x\rightarrow \infty \text{.}  \label{sh6}
\end{equation}%
If there are only finitely many equilibrium distributions in $\mathbb{P}$,
then the only possibility for the limit $\left( \ref{sh6}\right) $ to be
satisfied is that $F$ converges to some equilibrium distribution $P$ in $%
\mathbb{P}$.
\end{proof}

\begin{remark}
\label{Rem3} The role of $\eta >0$ above in Theorem $\ref{T7}$ (and below in
Theorem $\ref{T8}$) is that any (sub-)limit distribution, as well as the
filling factor for it, will have non-vanishing components. Existence of such 
$\eta >0$ is an assumption, and will not be possible to prove in general.
However, formally, the domain of the function $\varphi (y)=\log \dfrac{y}{%
\Psi _{\alpha }\left( y\right) }$ in the proof of Theorem $\ref{T7}$ could
be extended to the interval $[0,\dfrac{1}{\alpha }]$ by defining $\varphi
(0)=-\infty $ and $\varphi (\dfrac{1}{\alpha })=\infty $. Then the limit
distribution $G$ will be a general equilibrium distribution, see Remark $\ref%
{Rem2}$, and not necessarily of the form $\left( \ref{c14b}\right) $.
\end{remark}

\subsection{Spatially homogeneous system \label{S3.2}}

For the spatially homogeneous system%
\begin{equation}
\dfrac{dF}{dt}=Q^{\alpha }\left( F\right) \text{, }t\in \mathbb{R}_{+}\text{,%
}  \label{e2}
\end{equation}%
similar results, presented in Theorem $\ref{T8}$ below, for the trend to
equilibrium, can be obtained analogously, now considering instead the $%
\mathcal{H}$-functional%
\begin{equation*}
\mathcal{H}[F]=\mathcal{H}[F](t)=\sum\limits_{i=1}^{N}\mu (F_{i}(t))\text{,}
\end{equation*}%
with $\mu $ still given by expression $\left( \ref{c15}\right) $, and the
moments%
\begin{equation}
\left\{ 
\begin{array}{l}
j_{1}=\left\langle 1,F\right\rangle \\ 
j_{i+1}=\left\langle p^{i},F\right\rangle \text{ for }i\in \left\{
1,...,d\right\} \\ 
j_{d+2}=\left\langle \left\vert \mathbf{p}\right\vert ^{2},F\right\rangle%
\end{array}%
.\right.  \label{c3a}
\end{equation}
Any solution to the spatially homogeneous system $\left( \ref{e2}\right) $
now satisfies the inequality 
\begin{equation*}
\frac{d}{dt}\mathcal{H}[F]=\sum\limits_{i=1}^{N}\frac{dF_{i}}{dt}\log \frac{%
F_{i}}{\Psi _{\alpha }\left( F_{i}\right) }=\left\langle \log \frac{F}{\Psi
_{\alpha }\left( F\right) },Q^{\alpha }\left( F\right) \right\rangle \leq 0%
\text{.}
\end{equation*}

The following result is relevant in the spatially homogeneous case.

\begin{lemma}
\label{L1}Let $P$ and $\widetilde{P}$ be two equilibrium distributions with
the same moments $\left( \ref{c3a}\right) $. Then $P=\widetilde{P}$.
\end{lemma}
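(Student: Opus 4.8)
The plan is to reduce the statement to the strict monotonicity of $\mu'$ together with the fact that the logarithms of the associated Maxwellians are collision invariants of the explicit form $\left(\ref{l5}\right)$. By Proposition $\ref{Prop2}$ and $\left(\ref{c14b}\right)$, write $P/\Psi_\alpha(P)=M=e^{-\phi}$ and $\widetilde P/\Psi_\alpha(\widetilde P)=\widetilde M=e^{-\widetilde\phi}$. Taking logarithms and using the first relation in $\left(\ref{d1}\right)$, the equilibrium conditions become
\begin{equation*}
\mu'(P_i)=\log\frac{P_i}{\Psi_\alpha(P_i)}=-\phi_i,\qquad \mu'(\widetilde P_i)=-\widetilde\phi_i,\qquad i\in\{1,\dots,N\},
\end{equation*}
where $\phi=a+\mathbf b\cdot\mathbf p+c|\mathbf p|^2$ and $\widetilde\phi=\widetilde a+\widetilde{\mathbf b}\cdot\mathbf p+\widetilde c|\mathbf p|^2$ are each of the form $\left(\ref{l5}\right)$. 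In particular $0<P_i,\widetilde P_i<1/\alpha$ for all $i$, so the second relation in $\left(\ref{d1}\right)$ gives $\mu''>0$ on the common range of the components, i.e. $\mu'$ is strictly increasing there.

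Next I would set $\chi:=\widetilde\phi-\phi$, which is again a function of the form $\left(\ref{l5}\right)$, i.e. a linear combination of the moment densities $1,p^1,\dots,p^d,|\mathbf p|^2$, and evaluate $\langle\chi,P-\widetilde P\rangle$ in two ways. On one hand, since $\chi$ is spanned by exactly these densities and $P$ and $\widetilde P$ share the moments $\left(\ref{c3a}\right)$, we get $\langle\chi,P\rangle=\langle\chi,\widetilde P\rangle$, hence $\langle\chi,P-\widetilde P\rangle=0$. On the other hand, $\chi_i=\widetilde\phi_i-\phi_i=\mu'(P_i)-\mu'(\widetilde P_i)$, so
\begin{equation*}
0=\langle\chi,P-\widetilde P\rangle=\sum_{i=1}^N\bigl(\mu'(P_i)-\mu'(\widetilde P_i)\bigr)\bigl(P_i-\widetilde P_i\bigr).
\end{equation*}
Since $\mu'$ is strictly increasing, every summand is nonnegative and vanishes only if $P_i=\widetilde P_i$; as the whole sum is zero, $P_i=\widetilde P_i$ for every $i$, that is $P=\widetilde P$.

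I do not expect a serious obstacle: this is the standard strict-convexity/uniqueness argument for entropy minimizers under linear constraints, and it can equivalently be phrased as saying that an equilibrium is the unique minimizer of $\mathcal H[F]=\sum_i\mu(F_i)$ among distributions with prescribed moments, $\mu$ being strictly convex by $\left(\ref{d1}\right)$. The only point that needs a word of care is that both $P$ and $\widetilde P$ must have all components strictly inside $(0,1/\alpha)$, so that strict monotonicity of $\mu'$ is available and no boundary (vanishing or saturated) equilibria of the kind discussed in Remark $\ref{Rem2}$ interfere; this is built into the definition via $\left(\ref{c14b}\right)$, since $M>0$ forces the solution $P$ of $P/\Psi_\alpha(P)=M$ to lie in the open interval.
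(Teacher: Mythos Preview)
Your proof is correct and follows essentially the same route as the paper: both compute $\bigl\langle \mu'(P)-\mu'(\widetilde P),\,P-\widetilde P\bigr\rangle=\bigl\langle \log\frac{P}{\Psi_\alpha(P)}-\log\frac{\widetilde P}{\Psi_\alpha(\widetilde P)},\,P-\widetilde P\bigr\rangle=0$ using that the logarithms are collision invariants and the moments agree, and then conclude termwise. The only cosmetic difference is that you invoke $\mu''>0$ from $\left(\ref{d1}\right)$ to get strict monotonicity of $\mu'$, whereas the paper expands $\log\frac{P_i}{\Psi_\alpha(P_i)}$ explicitly in terms of $P_i^{-1}-\alpha$ and $P_i^{-1}+1-\alpha$ and applies $\left(\ref{c6}\right)$ to each piece.
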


\begin{proof}
Let $I=\left\{ 0,...,d+1\right\} $ and $\phi ^{0}=1,\phi ^{1}=p^{1},...,\phi
^{d}=p^{d},\phi ^{d+1}=\left\vert \mathbf{p}\right\vert ^{2}$. Then 
\begin{eqnarray*}
\log \left( \left( P^{-1}-\alpha \right) ^{-\alpha }\left( P^{-1}+1-\alpha
\right) ^{\alpha -1}\right) &=&\log \dfrac{P}{\Psi _{\alpha }\left( P\right) 
}=\sum\limits_{i\in I}c_{i}\phi ^{i}\text{ and} \\
\log \left( \left( \widetilde{P}^{-1}-\alpha \right) ^{-\alpha }\left( 
\widetilde{P}^{-1}+1-\alpha \right) ^{\alpha -1}\right) &=&\log \dfrac{%
\widetilde{P}}{\Psi _{\alpha }\left( \widetilde{P}\right) }%
=\sum\limits_{i\in I}\widetilde{c}_{i}\phi ^{i},
\end{eqnarray*}%
for some numbers $c_{1},...,c_{d+1}$ and $\widetilde{c}_{1},...,\widetilde{c}%
_{d+1}$, while 
\begin{equation*}
\left\langle \phi ^{i},P\right\rangle =j_{i}=\left\langle \phi ^{i},%
\widetilde{P}\right\rangle \text{ for }\,i\in I.
\end{equation*}%
Obviously,%
\begin{eqnarray*}
\left\langle \log \dfrac{P}{\Psi _{\alpha }\left( P\right) },P\right\rangle
&=&-\sum\limits_{i\in I}c_{i}j_{i}=\left\langle \log \dfrac{P}{\Psi _{\alpha
}\left( P\right) },\widetilde{P}\right\rangle \text{ and} \\
\left\langle \log \dfrac{\widetilde{P}}{\Psi _{\alpha }\left( \widetilde{P}%
\right) },P\right\rangle &=&-\sum\limits_{i\in I}\widetilde{c}%
_{i}j_{i}=\left\langle \log \dfrac{\widetilde{P}}{\Psi _{\alpha }\left( 
\widetilde{P}\right) },\widetilde{P}\right\rangle \text{.}
\end{eqnarray*}%
Hence, 
\begin{multline}
\sum\limits_{i=1}^{N}P_{i}\widetilde{P}_{i}\left( \widetilde{P}%
_{i}^{-1}-P_{i}^{-1}\right) \log \left( \left( \frac{P_{i}^{-1}-\alpha }{%
\widetilde{P}_{i}^{-1}-\alpha }\right) ^{-\alpha }\left( \frac{%
P_{i}^{-1}+1-\alpha }{\widetilde{P}_{i}^{-1}+1-\alpha }\right) ^{\alpha
-1}\right) \\
=\sum\limits_{i=1}^{N}\left( P_{i}-\widetilde{P}_{i}\right) \log \frac{%
\left( P_{i}^{-1}-\alpha \right) ^{-\alpha }\left( P_{i}^{-1}+1-\alpha
\right) ^{\alpha -1}}{\left( \widetilde{P}_{i}^{-1}-\alpha \right) ^{-\alpha
}\left( \widetilde{P}_{i}^{-1}+1-\alpha \right) ^{\alpha -1}} \\
=\left\langle \log \dfrac{P}{\Psi _{\alpha }\left( P\right) }-\log \dfrac{%
\widetilde{P}}{\Psi _{\alpha }\left( \widetilde{P}\right) },P-\widetilde{P}%
\right\rangle =0\text{.}  \label{h2}
\end{multline}%
By relation $\left( \ref{c6}\right) $, it follows that 
\begin{multline}
\left( \widetilde{P}_{i}^{-1}-P_{i}^{-1}\right) \log \left( \left( \frac{%
P_{i}^{-1}-\alpha }{\widetilde{P}_{i}^{-1}-\alpha }\right) ^{-\alpha }\left( 
\frac{P_{i}^{-1}+1-\alpha }{\widetilde{P}_{i}^{-1}+1-\alpha }\right)
^{\alpha -1}\right) \\
=\alpha \left( \widetilde{P}_{i}^{-1}-\alpha -\left( P_{i}^{-1}-\alpha
\right) \right) \log \left( \frac{\widetilde{P}_{i}^{-1}-\alpha }{%
P_{i}^{-1}-\alpha }\right) \\
\left( 1-\alpha \right) \left( \widetilde{P}_{i}^{-1}+1-\alpha -\left(
P_{i}^{-1}+1-\alpha \right) \right) \log \left( \frac{\widetilde{P}%
_{i}^{-1}+1-\alpha }{P_{i}^{-1}+1-\alpha }\right) \geq 0\text{ for }i\in I.
\label{ie1}
\end{multline}%
Hence, by equality $\left( \ref{h2}\right) $, it follows that\ $P=\widetilde{%
P}$. Indeed, all the inequalities in $\left( \ref{ie1}\right) $, must be
equalities, and then 
\begin{equation*}
\left( \widetilde{P}_{i}^{-1}-P_{i}^{-1}\right) \log \left( \frac{\widetilde{%
P}_{i}^{-1}-\alpha }{P_{i}^{-1}-\alpha }\right) =\left( \widetilde{P}%
_{i}^{-1}-P_{i}^{-1}\right) \log \left( \frac{\widetilde{P}%
_{i}^{-1}+1-\alpha }{P_{i}^{-1}+1-\alpha }\right) =0\text{,}
\end{equation*}%
implying that $\widetilde{P}_{i}^{-1}=P_{i}^{-1}$ for all $i\in I.$
\end{proof}

\begin{theorem}
\label{T8}Let $F=F(t)$ be a bounded solution to the system $\left( \ref{e2}%
\right) $, and assume that there exist numbers $\eta >0$ and $t_{0}\geq 0$,
such that $\eta \leq F_{i}(t)\leq \dfrac{1}{\alpha }-\eta $ for all $i\in
\left\{ 1,...,N\right\} $ and $t\geq t_{0}$. Then 
\begin{equation*}
\underset{t\rightarrow \infty }{\lim }F(t)=P\text{,}
\end{equation*}%
where $P$ is the equilibrium distribution with the same moments $\left( \ref%
{c3a}\right) $ as $F$.
\end{theorem}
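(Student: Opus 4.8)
The plan is to run the argument of Theorem \ref{T7} with the spatial variable $x$ replaced by the time variable $t$, the system (\ref{e1}) replaced by (\ref{e2}), the $\mathcal{H}$-functional $\widetilde{\mathcal{H}}$ replaced by $\mathcal{H}[F](t)=\sum_{i=1}^{N}\mu(F_{i}(t))$, and the fluxes (\ref{c3}) replaced by the moments (\ref{c3a}); the one genuinely new ingredient is Lemma \ref{L1}, which is used at the very end to upgrade convergence to a manifold into convergence to a point. First I would note that for $t\geq t_{0}$ the solution stays in the compact box $K_{\eta}=[\eta,\tfrac{1}{\alpha}-\eta]^{N}$, so $Q^{\alpha}(F)$ -- and hence $dF/dt$ by (\ref{e2}) -- is bounded on $[t_{0},\infty)$, and $F$ is $C^{1}$ there. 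The function $\mu$ of (\ref{c15}) is continuous, non-positive and bounded below on $[0,\tfrac{1}{\alpha}]$ (indeed $\mu(0)=\mu(\tfrac{1}{\alpha})=0$ and $\mu''>0$ by (\ref{d1})), so $\mathcal{H}[F](t)$ is bounded; and, by the computation leading to (\ref{c17e}), $\tfrac{d}{dt}\mathcal{H}[F]=\langle\log\tfrac{F}{\Psi_{\alpha}(F)},Q^{\alpha}(F)\rangle\leq 0$. Hence $\mathcal{H}[F](t)$ is non-increasing and bounded below, $\lim_{t\to\infty}\mathcal{H}[F](t)$ exists and is finite, and $\int_{t_{0}}^{\infty}\bigl(-\tfrac{d}{dt}\mathcal{H}[F]\bigr)\,dt<\infty$.

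Next I would transcribe, essentially verbatim, the contradiction argument of Theorem \ref{T7}. Let $\mathbb{P}$ be the set of equilibrium distributions (\ref{c14b}) having the moments (\ref{c3a}) of $F$; these moments are constants of motion, because $1,\mathbf{p},\lvert\mathbf{p}\rvert^{2}$ are collision invariants, so applying (\ref{c4a}) to (\ref{e2}) makes the $t$-derivative of each moment in (\ref{c3a}) vanish along the solution. If $\mathrm{dist}(F(t),\mathbb{P})\not\to 0$, one obtains an increasing sequence $y_{s}\to\infty$ with $y_{s+1}-y_{s}\geq\epsilon_{1}$ and $\mathrm{dist}(F(y_{s}),\mathbb{P})\geq\delta_{1}$, thickens it to intervals $I_{s}$ of fixed length $2\epsilon_{2}$ using the bound on $dF/dt$, and, from convergence of the positive series $\sum_{s}\Lambda(I_{s})$, extracts $z_{s}\in I_{s}$ with $\tfrac{d}{dt}\mathcal{H}[F](z_{s})\to 0$. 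By Bolzano-Weierstrass a subsequence $F(z_{s_{r}})$ converges to some $G\in K_{\eta}$; since $G$ lies in the open box, $\Psi_{\alpha}(G)$ and $\log\tfrac{G}{\Psi_{\alpha}(G)}$ are well defined and the map $G\mapsto\langle\log\tfrac{G}{\Psi_{\alpha}(G)},Q^{\alpha}(G)\rangle$ is continuous there (this is exactly where $\eta>0$ is used, cf. Remark \ref{Rem3}), so $\langle\log\tfrac{G}{\Psi_{\alpha}(G)},Q^{\alpha}(G)\rangle=0$, i.e., by Proposition \ref{Prop2}, $G$ is an equilibrium distribution, and it clearly has the same moments as $F$, so $G\in\mathbb{P}$ -- contradicting $\mathrm{dist}(F(z_{s_{r}}),\mathbb{P})\geq\delta_{1}/2$. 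Hence $\mathrm{dist}(F(t),\mathbb{P})\to 0$ as $t\to\infty$.

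To finish, I would invoke Lemma \ref{L1}: two equilibrium distributions with the same moments (\ref{c3a}) coincide, so $\mathbb{P}$ is the single point $P$. Therefore $\mathrm{dist}(F(t),\mathbb{P})=\lvert F(t)-P\rvert\to 0$, that is, $\lim_{t\to\infty}F(t)=P$.

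The proof is a routine adaptation of Theorem \ref{T7}, so the only real difficulty has already been isolated into Lemma \ref{L1}: in the discrete setting there is, a priori, no reason for the equilibrium manifold $\mathbb{P}$ to be a single point -- which is why Theorem \ref{T7} can conclude genuine convergence only under the extra hypothesis of finitely many equilibria -- and it is precisely the strict convexity of $\mu$ (equivalently, the strict monotonicity of $y\mapsto\log\tfrac{y}{\Psi_{\alpha}(y)}$) exploited in Lemma \ref{L1} that rules this out. The remaining points to watch are bookkeeping ones: all limiting arguments must start at $t_{0}$ rather than at $0$, and one must keep the subsequential limit $G$ inside the open box $0<G_{i}<\tfrac{1}{\alpha}$ so that the characterization of Proposition \ref{Prop2} applies and one need not descend to the general equilibria of Remark \ref{Rem2}.
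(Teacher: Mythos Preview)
Your proposal is correct and follows exactly the approach the paper indicates: the paper does not write out a separate proof of Theorem~\ref{T8} but states that it is ``proved analogously to Theorem~\ref{T7}'', with the sharpening to a genuine limit coming from Lemma~\ref{L1}. Your write-up carries this out faithfully, including the bookkeeping about starting at $t_{0}$ and keeping the subsequential limit inside the open box so that Proposition~\ref{Prop2} applies.
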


\begin{remark}
\label{R1}Let $I_{N}=\left\{ 1,...,N\right\} $ and $1\leq m\leq n\leq N-m$,
and denote%
\begin{eqnarray}
Q_{i}^{\alpha }\left( F\right) &=&\sum\limits_{1\leq m\leq n\leq
N-m}a_{mn}Q_{i}^{\alpha ,mn}\left( F\right) \text{, with }a_{mn}\geq 0\text{%
, where}  \notag \\
Q_{i}^{\alpha ,mn}\left( F\right) &=&\sum\limits_{\substack{ I^{\prime
},I^{\prime \prime }\subset I_{N}  \\ \left\vert I^{\prime }\right\vert =n%
\text{, }\left\vert I^{\prime \prime }\right\vert =m}}\Gamma _{I^{\prime
}}^{I^{\prime \prime }}\left( \sum\limits_{k\in I^{\prime }}\delta
_{ik}-\sum\limits_{k\in I^{\prime \prime }}\delta _{ik}\right)  \notag \\
&&\times \left( \prod_{j\in I^{\prime }}F_{j}\prod\limits_{j\in I^{\prime
\prime }}\Psi _{\alpha }\left( F_{j}\right) -\prod\limits_{j\in I^{\prime
\prime }}F_{j}\prod_{j\in I^{\prime }}\Psi _{\alpha }\left( F_{j}\right)
\right) \\
&=&\sum\limits_{\substack{ I^{\prime },I^{\prime \prime }\subset I  \\ %
\left\vert I^{\prime }\right\vert =n\text{, }\left\vert I^{\prime
}\right\vert =m}}\Gamma _{I^{\prime }}^{I^{\prime \prime }}\left(
\sum\limits_{k\in I^{\prime }}\delta _{ik}-\sum\limits_{k\in I^{\prime
\prime }}\delta _{ik}\right)  \notag \\
&&\times \prod_{j\in I^{\prime }\cup I^{\prime \prime }}\Psi _{\alpha
}\left( F_{j}\right) \left( \prod_{j\in I^{\prime }}\frac{F_{j}}{\Psi
_{\alpha }\left( F_{j}\right) }-\prod\limits_{j\in I^{\prime \prime }}\frac{%
F_{j}}{\Psi _{\alpha }\left( F_{j}\right) }\right) \text{, } \\
&&\text{with }\Psi _{\alpha }\left( y\right) =\left( 1-\alpha y\right)
^{\alpha }\left( 1+\left( 1-\alpha \right) y\right) ^{1-\alpha }\text{.}
\label{q1}
\end{eqnarray}%
Here $\Gamma _{I^{\prime }}^{I^{\prime \prime }}=0$ if the relations 
\begin{equation*}
\sum\limits_{k\in I^{\prime }}\mathbf{p}_{k}=\sum\limits_{k\in I^{\prime
\prime }}\mathbf{p}_{k}\text{ and }\sum\limits_{k\in I^{\prime }}\left\vert 
\mathbf{p}_{k}\right\vert ^{2}=\sum\limits_{k\in I^{\prime \prime
}}\left\vert \mathbf{p}_{k}\right\vert ^{2}
\end{equation*}%
are not satisfied (can be replaced by other collision invariants as well).
Then, in a similar way as above, we can obtain corresponding results for the
system $\left( \ref{ln1}\right) $, and its restrictions to systems $\left( %
\ref{e1}\right) $ and $\left( \ref{e2}\right) $. In particular, the
stationary points of the systems are still characterized by equation $\left( %
\ref{c14b}\right) $ and (at least versions of) Theorems $\ref{T7}$ and $\ref%
{T8}$ are still valid. Indeed, if at least one $a_{mn}$ such that $m\neq n$
is nonzero, then the collision invariants (for normal models) will be of the
form 
\begin{equation*}
\phi =\mathbf{b\cdot p}+c\left\vert \mathbf{p}\right\vert ^{2},
\end{equation*}%
and we will have to exclude the invariants $\widetilde{j}_{1}$ and $j_{1}$
from the invariants $\left( \ref{c3}\right) $ and $\left( \ref{c3a}\right) $%
, respectively, for Theorem $\ref{T7}$ and Theorem $\ref{T8}$ to stay valid,
cf. \cite{Be-23g}. A drawback is that, in general, it will not be clear how
to construct the sets $\mathcal{P}$ to obtain normal discrete models. An
example when such generalizations (with $\alpha =0$) are of interest is for
excitations in a Bose gas interacting with a Bose-Einstein condensate \cite%
{KD-84,ZNG-99,AN-12,GR-13,AN-13,Be-15, Be-23g}.\ However, even if the
momentum is still assumed to be conserved during a collision, the energy
conserved will (in the general case) be different from the kinetic one
conserved by relations $\left( \ref{l3}\right) $. Furthermore, the equation
will (in the general case) also be coupled by a Gross-Pitaevskii equation
for the density of the condensate \cite{KD-84,ZNG-99,AN-12}.
\end{remark}

\section{Linearized collision operator \label{S4}}

For any $\alpha \in \left[ 0,1\right] $%
\begin{eqnarray*}
&&\Psi _{\alpha }^{\prime }\left( y\right) \\
&=&-\alpha ^{2}\left( 1-\alpha y\right) ^{\alpha -1}\left( 1+\left( 1-\alpha
\right) y\right) ^{1-\alpha }+\left( 1-\alpha \right) ^{2}\left( 1-\alpha
y\right) ^{\alpha }\left( 1+\left( 1-\alpha \right) y\right) ^{-\alpha } \\
&=&\Psi _{\alpha }\left( y\right) \left( \frac{1-2\alpha -\alpha \left(
1-\alpha \right) y}{\left( 1-\alpha y\right) \left( 1+\left( 1-\alpha
\right) y\right) }\right) =\Psi _{\alpha }\left( y\right) \left( \frac{%
1-2\alpha -\alpha \left( 1-\alpha \right) y}{1+y\left( 1-2\alpha -\alpha
\left( 1-\alpha \right) y\right) }\right) \text{,}
\end{eqnarray*}%
and, hence, 
\begin{equation}
\frac{\Psi _{\alpha }\left( P_{i}\right) -\Psi _{\alpha }^{\prime }\left(
P_{i}\right) P_{i}}{P_{i}\Psi _{\alpha }\left( P_{i}\right) }=\frac{1}{%
P_{i}\left( 1-\alpha P_{i}\right) \left( 1+\left( 1-\alpha \right)
P_{i}\right) }\text{ for }i\in \left\{ 1,...,N\right\} \text{.}  \label{l20}
\end{equation}%
Furthermore, substituting%
\begin{equation}
F=P+R^{1/2}f\text{, with }R=P\left( 1-\alpha P\right) \left( 1+\left(
1-\alpha \right) P\right) \text{ and }\dfrac{P}{\Psi _{\alpha }\left(
P\right) }=M\text{,}  \label{l20a}
\end{equation}%
in system $\left( \ref{ln1}\right) $, and ignoring all terms of second
order, the linearized system%
\begin{equation*}
\frac{\partial f_{i}}{\partial t}+\mathbf{p}_{i}\cdot \nabla _{\mathbf{x}%
}f_{i}+\left( Lf\right) _{i}=0\text{ for }i\in \left\{ 1,...,N\right\}
\end{equation*}%
is obtained. Here $L$ is the linearized collision operator--$N\times N$
matrix--given by%
\begin{equation}
\left( Lf\right) _{i}=\sum\limits_{j,k,l=1}^{N}\frac{\Gamma _{ij}^{kl}}{%
R_{i}^{1/2}}%
(P_{ij}^{kl}f_{i}+P_{ji}^{kl}f_{j}-P_{kl}^{ij}f_{k}-P_{lk}^{ij}f_{l})\text{
for }i\in \left\{ 1,...,N\right\} \text{.}  \label{l22c}
\end{equation}%
Note that, in agreement with \cite{Be-17,Be-23g}, for bosons ($\alpha =0$)
and fermions ($\alpha =1$)%
\begin{equation*}
R=P(1+P)\text{ and }R=P(1-P)\text{,}
\end{equation*}%
respectively, while for semions ($\alpha =1/2$) 
\begin{equation*}
R=P\left( 1-\frac{P^{2}}{4}\right) \text{.}
\end{equation*}

\subsection{Some properties of the linearized collision operator \label{S4.1}%
}

Denoting 
\begin{equation*}
\Pi _{ij}^{kl}\left( g\right) =g_{i}g_{j}\Psi _{\alpha }\left( g_{k}\right)
\Psi _{\alpha }\left( g_{l}\right) -g_{k}g_{l}\Psi _{\alpha }\left(
g_{i}\right) \Psi _{\alpha }\left( g_{j}\right) \text{,}
\end{equation*}%
it can be observed that 
\begin{eqnarray}
P_{ij}^{kl} &=&\left. \frac{\partial \Pi _{ij}^{kl}\left( P+R^{1/2}f\right) 
}{\partial f_{i}}\right\vert _{f=0}=R_{i}^{1/2}\left( P_{j}\Psi _{\alpha
}\left( P_{k}\right) \Psi _{\alpha }\left( P_{l}\right) -P_{k}P_{l}\Psi
_{\alpha }^{\prime }\left( P_{i}\right) \Psi _{\alpha }\left( P_{j}\right)
\right)  \notag \\
&=&\frac{P_{i}P_{j}\Psi _{\alpha }\left( P_{k}\right) \Psi _{\alpha }\left(
P_{l}\right) }{R_{i}^{1/2}}\frac{\Psi _{\alpha }\left( P_{i}\right) -\Psi
_{\alpha }^{\prime }\left( P_{i}\right) P_{i}}{P_{i}\Psi _{\alpha }\left(
P_{i}\right) }R_{i}=\frac{P_{i}P_{j}\Psi _{\alpha }\left( P_{k}\right) \Psi
_{\alpha }\left( P_{l}\right) }{R_{i}^{1/2}}\text{,}  \label{l22d}
\end{eqnarray}%
for any indices $\left\{ i,j,k,l\right\} \subset \left\{ 1,...,N\right\} $,
since, by relations $\left( \ref{l20}\right) $ and $\left( \ref{l20a}\right) 
$,%
\begin{equation}
\frac{\Psi _{\alpha }\left( P_{i}\right) -\Psi _{\alpha }^{\prime }\left(
P_{i}\right) P_{i}}{P_{i}\Psi _{\alpha }\left( P_{i}\right) }R_{i}=1\text{
for }i\in \left\{ 1,...,N\right\} \text{.}  \label{l24}
\end{equation}

By relations $\left( \ref{l3}\right) $ and $\left( \ref{c14b}\right) $, the
relation 
\begin{equation}
P_{i}P_{j}\Psi _{\alpha }\left( P_{k}\right) \Psi _{\alpha }\left(
P_{l}\right) =P_{k}P_{l}\Psi _{\alpha }\left( P_{i}\right) \Psi _{\alpha
}\left( P_{j}\right)  \label{l23}
\end{equation}%
is obtained for any indices $\left\{ i,j,k,l\right\} \subset \left\{
1,...,N\right\} $ such that $\Gamma _{ij}^{kl}\neq 0$. Hence, by relations $%
\left( \ref{l6}\right) $, $\left( \ref{l22c}\right) $, $\left( \ref{l22d}%
\right) $, and $\left( \ref{l23}\right) $, we have the following lemma for
the weak form of the linearized collision operator.

\begin{lemma}
\label{L2}For any functions $g=g(\mathbf{x},\mathbf{p},t)$ and $f=f(\mathbf{x%
},\mathbf{p},t)$ the weak form of the linearized collision operator can be
recast as 
\begin{multline}
\left\langle g,Lf\right\rangle =\frac{1}{4}\sum\limits_{i,j,k,l=1}^{N}\Gamma
_{ij}^{kl}P_{i}P_{j}\Psi _{\alpha }\left( P_{k}\right) \Psi _{\alpha }\left(
P_{l}\right) \left( \frac{f_{i}}{R_{i}^{1/2}}+\frac{f_{j}}{R_{j}^{1/2}}-%
\frac{f_{k}}{R_{k}^{1/2}}-\frac{f_{l}}{R_{l}^{1/2}}\right) \\
\times \left( \frac{g_{i}}{R_{i}^{1/2}}+\frac{g_{j}}{R_{j}^{1/2}}-\frac{g_{k}%
}{R_{k}^{1/2}}-\frac{g_{l}}{R_{l}^{1/2}}\right) \text{.}  \label{l25}
\end{multline}%
The following proposition follows directly.
\end{lemma}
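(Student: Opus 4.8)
The plan is to run the same symmetrization that took $\left( \ref{l4b}\right) $ into $\left( \ref{l4c}\right) $, now carrying the linearized weights instead of the quadratic ones. First I would write $\left\langle g,Lf\right\rangle =\sum_{i=1}^{N}g_{i}\left( Lf\right) _{i}$ using $\left( \ref{l22c}\right) $ and feed in the explicit coefficients. From $\left( \ref{l22d}\right) $—together with the formulas obtained from it by the interchanges $i\leftrightarrow j$ and $\left( i,j\right) \leftrightarrow \left( k,l\right) $—one has $P_{ij}^{kl}=P_{i}P_{j}\Psi _{\alpha }\left( P_{k}\right) \Psi _{\alpha }\left( P_{l}\right) /R_{i}^{1/2}$, $P_{ji}^{kl}=P_{i}P_{j}\Psi _{\alpha }\left( P_{k}\right) \Psi _{\alpha }\left( P_{l}\right) /R_{j}^{1/2}$, $P_{kl}^{ij}=P_{k}P_{l}\Psi _{\alpha }\left( P_{i}\right) \Psi _{\alpha }\left( P_{j}\right) /R_{k}^{1/2}$, and $P_{lk}^{ij}=P_{k}P_{l}\Psi _{\alpha }\left( P_{i}\right) \Psi _{\alpha }\left( P_{j}\right) /R_{l}^{1/2}$. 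The key point is that, by $\left( \ref{l23}\right) $, the two numerators here coincide on the support of $\Gamma _{ij}^{kl}$, so inside the $\Gamma $-weighted sum all four coefficients may be replaced by $W_{ij}^{kl}:=P_{i}P_{j}\Psi _{\alpha }\left( P_{k}\right) \Psi _{\alpha }\left( P_{l}\right) $ divided by the appropriate $R^{1/2}$. After this substitution $g_{i}\left( Lf\right) _{i}$ becomes $\sum_{j,k,l}\Gamma _{ij}^{kl}W_{ij}^{kl}\,(g_{i}/R_{i}^{1/2})\left( f_{i}/R_{i}^{1/2}+f_{j}/R_{j}^{1/2}-f_{k}/R_{k}^{1/2}-f_{l}/R_{l}^{1/2}\right) $, and $\left\langle g,Lf\right\rangle $ is the full four-index sum of this.

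Next I would symmetrize. The weight $\Gamma _{ij}^{kl}W_{ij}^{kl}$ is invariant under $i\leftrightarrow j$, under $k\leftrightarrow l$, and under $\left( i,j\right) \leftrightarrow \left( k,l\right) $: the first two by the manifest symmetry of $W$ together with $\Gamma _{ij}^{kl}=\Gamma _{ji}^{kl}$, and the third by $\Gamma _{ij}^{kl}=\Gamma _{kl}^{ij}$ combined with $W_{ij}^{kl}=W_{kl}^{ij}$ on the support of $\Gamma _{ij}^{kl}$ (again $\left( \ref{l23}\right) $). The bracket $f_{i}/R_{i}^{1/2}+f_{j}/R_{j}^{1/2}-f_{k}/R_{k}^{1/2}-f_{l}/R_{l}^{1/2}$ is symmetric under $i\leftrightarrow j$ and under $k\leftrightarrow l$ but antisymmetric under $\left( i,j\right) \leftrightarrow \left( k,l\right) $. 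Relabelling the dummy indices, the sum with $g_{i}/R_{i}^{1/2}$ in front equals the one with $g_{j}/R_{j}^{1/2}$ and equals minus each of the ones with $g_{k}/R_{k}^{1/2}$ and $g_{l}/R_{l}^{1/2}$; averaging the four copies produces the factor $\frac{1}{4}$ and the fully symmetric bracket $g_{i}/R_{i}^{1/2}+g_{j}/R_{j}^{1/2}-g_{k}/R_{k}^{1/2}-g_{l}/R_{l}^{1/2}$, which is precisely $\left( \ref{l25}\right) $.

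I expect the only delicate point—and it is the same one as in the derivation of $\left( \ref{l4c}\right) $—to be bookkeeping about where $\left( \ref{l23}\right) $ is used: the identity $W_{ij}^{kl}=W_{kl}^{ij}$ and the collapse of the four coefficients hold only when $\Gamma _{ij}^{kl}\neq 0$, so these manipulations must be carried out inside the $\Gamma $-weighted sum and not term by term; terms with $\Gamma _{ij}^{kl}=0$ contribute nothing and do no harm. No genuine analytic obstacle arises, since everything reduces to finite sums and elementary index relabelling once $\left( \ref{l22d}\right) $, $\left( \ref{l23}\right) $, and the symmetries $\left( \ref{l6}\right) $ are in hand.
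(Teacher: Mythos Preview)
Your proposal is correct and matches the paper's approach exactly: the paper does not spell out a proof but simply states that the lemma follows from relations $\left(\ref{l6}\right)$, $\left(\ref{l22c}\right)$, $\left(\ref{l22d}\right)$, and $\left(\ref{l23}\right)$, which is precisely the chain of substitutions and the four-fold symmetrization you carry out. Your remark that the replacement $W_{ij}^{kl}=W_{kl}^{ij}$ is only valid on the support of $\Gamma_{ij}^{kl}$ is the right caveat and is implicit in the paper's reference to $\left(\ref{l23}\right)$.
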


\begin{proposition}
\label{P1}The matrix $L$ is symmetric and positive semi-definite, i.e.,%
\begin{equation*}
\left\langle g,Lf\right\rangle =\left\langle Lg,f\right\rangle \text{ and }%
\left\langle f,Lf\right\rangle \geq 0
\end{equation*}%
for all functions $g=g(\mathbf{x},\mathbf{p},t)$ and $f=f(\mathbf{x},\mathbf{%
p},t)$.
\end{proposition}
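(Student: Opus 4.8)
The statement to prove is Proposition \ref{P1}: the matrix $L$ is symmetric and positive semi-definite.

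The plan is to derive both assertions directly from the weak-form identity \eqref{l25} established in Lemma \ref{L2}. First I would observe that the right-hand side of \eqref{l25} is manifestly symmetric under the interchange $f \leftrightarrow g$: the two bracketed factors
$\left( \frac{f_{i}}{R_{i}^{1/2}}+\frac{f_{j}}{R_{j}^{1/2}}-\frac{f_{k}}{R_{k}^{1/2}}-\frac{f_{l}}{R_{l}^{1/2}}\right)$ and $\left( \frac{g_{i}}{R_{i}^{1/2}}+\frac{g_{j}}{R_{j}^{1/2}}-\frac{g_{k}}{R_{k}^{1/2}}-\frac{g_{l}}{R_{l}^{1/2}}\right)$ are multiplied together, and the scalar prefactor $\Gamma_{ij}^{kl} P_i P_j \Psi_\alpha(P_k)\Psi_\alpha(P_l)$ does not involve $f$ or $g$. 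Hence $\left\langle g, Lf\right\rangle = \left\langle f, Lg\right\rangle$ for all $f,g$; combined with the elementary fact that $\left\langle f, Lg\right\rangle = \left\langle Lg, f\right\rangle$ for the standard real scalar product, this gives $\left\langle g, Lf\right\rangle = \left\langle Lg, f\right\rangle$, which is precisely the symmetry of $L$ as a matrix.

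Second, for positive semi-definiteness I would set $g = f$ in \eqref{l25}, obtaining
\[
\left\langle f, Lf\right\rangle = \frac{1}{4}\sum_{i,j,k,l=1}^{N} \Gamma_{ij}^{kl} P_i P_j \Psi_\alpha(P_k)\Psi_\alpha(P_l)\left( \frac{f_{i}}{R_{i}^{1/2}}+\frac{f_{j}}{R_{j}^{1/2}}-\frac{f_{k}}{R_{k}^{1/2}}-\frac{f_{l}}{R_{l}^{1/2}}\right)^{2}.
\]
Every summand is nonnegative: $\Gamma_{ij}^{kl} \geq 0$ by the symmetry relations \eqref{l6}; the factors $P_i, P_j, \Psi_\alpha(P_k), \Psi_\alpha(P_l)$ are all strictly positive because the constraint $0 < P_i < 1/\alpha$ makes $1-\alpha P_i > 0$ and $1+(1-\alpha)P_i > 0$, so $\Psi_\alpha(P_i) = (1-\alpha P_i)^\alpha (1+(1-\alpha)P_i)^{1-\alpha} > 0$; and the remaining factor is a square, hence nonnegative. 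Summing nonnegative terms yields $\left\langle f, Lf\right\rangle \geq 0$.

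There is essentially no obstacle here — the proposition is labelled as following \emph{directly} from Lemma \ref{L2}, and indeed both claims are immediate once \eqref{l25} is in hand. The only point requiring a word of care is the positivity of the coefficients $P_i\Psi_\alpha(P_i)$ etc., which rests on the standing assumption that the equilibrium components satisfy $0 < P_i < 1/\alpha$ (the exclusion of vanishing and saturated states); if one wanted the argument to degrade gracefully at the limiting cases $\alpha \in \{0,1\}$ one would note that $\Psi_0(P_i) = 1+P_i > 0$ and $\Psi_1(P_i) = 1-P_i > 0$ still hold under the same bounds. I would therefore present the proof in two short paragraphs mirroring the two displayed deductions above.
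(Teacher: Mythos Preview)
Your proposal is correct and matches the paper's approach exactly: the paper states that Proposition~\ref{P1} ``follows directly'' from Lemma~\ref{L2}, and your two paragraphs simply spell out that direct deduction (symmetry from the $f\leftrightarrow g$ invariance of the right-hand side of \eqref{l25}, positive semi-definiteness from setting $g=f$ and noting each summand is a nonnegative coefficient times a square).
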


Furthermore, by relation $\left( \ref{l25}\right) $, $\left\langle
f,Lf\right\rangle =0$ if and only if%
\begin{equation}
\frac{f_{i}}{R_{i}^{1/2}}+\frac{f_{j}}{R_{j}^{1/2}}=\frac{f_{k}}{R_{k}^{1/2}}%
+\frac{f_{l}}{R_{l}^{1/2}}  \label{c20c}
\end{equation}%
for all indices $\left\{ i,j,k,l\right\} \subset \left\{ 1,...,N\right\} $
such that $\Gamma _{ij}^{kl}\neq 0$. Denoting $f=R^{1/2}\phi $ in equality $%
\left( \ref{c20c}\right) $, relation $\left( \ref{c9c}\right) $ is obtained.
Hence, since $L$ is semi-positive,%
\begin{equation*}
Lf=0\text{ if and only if }f=R^{1/2}\phi \text{,}
\end{equation*}%
where $\phi $\ is a collision invariant $\left( \ref{c9c}\right) $. The
following proposition follows.

\begin{proposition}
\label{P2}For normal models the kernel of the linearized operator $L$ is 
\begin{eqnarray}
\ker L &=&\mathrm{span}\left(
R^{1/2},R^{1/2}p^{1},...,R^{1/2}p^{d},R^{1/2}\left\vert \mathbf{p}%
\right\vert ^{2}\right) \text{, }  \notag \\
\text{with }R &=&P\left( 1-\alpha P\right) \left( 1+\left( 1-\alpha \right)
P\right) \text{.}  \label{c21}
\end{eqnarray}
\end{proposition}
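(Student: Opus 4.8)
The plan is to read off the kernel directly from the weak form \eqref{l25} and the positive semi-definiteness already recorded in Proposition \ref{P1}, so that essentially no new machinery is needed. The key observation is that, since $L$ is symmetric and positive semi-definite, $Lf=0$ holds if and only if $\left\langle f,Lf\right\rangle =0$; and by \eqref{l25} the latter is equivalent to the linear relation \eqref{c20c} holding on every admissible quadruple $\left\{ i,j,k,l\right\}$ with $\Gamma_{ij}^{kl}\neq 0$. The whole proof is then the translation between \eqref{c20c} and the collision-invariant relation \eqref{c9c}.

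First I would prove the two inclusions of the identity $\ker L=\left\{ R^{1/2}\phi:\phi\text{ a collision invariant}\right\}$. For ``$\supseteq$'', I would substitute $f=R^{1/2}\phi$ into \eqref{l25}: the factor $\frac{f_i}{R_i^{1/2}}+\frac{f_j}{R_j^{1/2}}-\frac{f_k}{R_k^{1/2}}-\frac{f_l}{R_l^{1/2}}$ collapses to $\phi_i+\phi_j-\phi_k-\phi_l$, which vanishes on every term carrying a nonzero $\Gamma_{ij}^{kl}$ by \eqref{c9c}; hence $\left\langle g,Lf\right\rangle =0$ for all $g$, so $Lf=0$. For ``$\subseteq$'', I would start from $Lf=0$, deduce $\left\langle f,Lf\right\rangle =0$, hence \eqref{c20c} on all admissible quadruples, and then set $\phi:=R^{-1/2}f$; this is legitimate because the standing restriction $0<P_i<\frac1\alpha$ forces $R_i=P_i(1-\alpha P_i)(1+(1-\alpha)P_i)>0$, so $R^{1/2}$ is entrywise positive and invertible. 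With this $\phi$, \eqref{c20c} is exactly \eqref{c9c}, so $\phi$ is a collision invariant.

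Next I would invoke the normality hypothesis: for a normal discrete model every collision invariant is of the form \eqref{l5}, i.e. $\phi=a+\mathbf{b}\cdot\mathbf{p}+c\left\vert\mathbf{p}\right\vert^{2}$. Combining this with the previous step gives $\ker L=\left\{ R^{1/2}\left( a+\mathbf{b}\cdot\mathbf{p}+c\left\vert\mathbf{p}\right\vert^{2}\right):a,c\in\mathbb{R},\ \mathbf{b}\in\mathbb{R}^{d}\right\}$, which is precisely $\mathrm{span}\!\left( R^{1/2},R^{1/2}p^{1},\dots,R^{1/2}p^{d},R^{1/2}\left\vert\mathbf{p}\right\vert^{2}\right)$ as claimed in \eqref{c21}; and I would remark (optionally) that these $d+2$ generators are linearly independent, since multiplication by the positive diagonal vector $R^{1/2}$ is invertible and, for a normal model, the restrictions of $1,p^{1},\dots,p^{d},\left\vert\mathbf{p}\right\vert^{2}$ to $\mathcal{P}$ span a space of dimension exactly $d+2$.

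I do not expect a genuine obstacle here: the proof is short bookkeeping on top of \eqref{l25} and Proposition \ref{P1}. The only point that requires care — and the only place where a hypothesis does real work — is the division by $R_i^{1/2}$ in the ``$\subseteq$'' direction, which needs $R_i>0$ and hence relies on the exclusion of vanishing and saturated states $0<P_i<\frac1\alpha$; this is exactly the technical role flagged in Remark \ref{Rem2}.
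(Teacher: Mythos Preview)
Your proposal is correct and follows essentially the same route as the paper: use the weak form \eqref{l25} together with positive semi-definiteness from Proposition~\ref{P1} to reduce $Lf=0$ to the relation \eqref{c20c}, substitute $f=R^{1/2}\phi$ to recover the collision-invariant condition \eqref{c9c}, and then invoke normality to identify the span. Your write-up is in fact slightly more detailed than the paper's---you spell out both inclusions, justify the invertibility of $R^{1/2}$ via $0<P_i<\tfrac{1}{\alpha}$, and note linear independence of the generators---but the argument is the same.
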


\begin{remark}
\label{R2} \textit{Generalized collision operator.} More generally,
considering the collision operator $\left( \ref{q1}\right) $, corresponding
results in Lemma $\ref{L2}$, and Proposition $\ref{P1}$ and $\ref{P2}$ for
the linearized collision operator $L$ can be obtained analogously. Indeed,
the linearized operator $L$ is symmetric and positive semi-definite.
However, if at least one $a_{mn}$ such that $m\neq n$ is nonzero, then for
normal models the kernel $\left( \ref{c21}\right) $ in Proposition $\ref{P2}$
has to be replaced by 
\begin{equation*}
\ker L=\mathrm{span}\left( R^{1/2}p^{1},...,R^{1/2}p^{d},R^{1/2}\left\vert 
\mathbf{p}\right\vert ^{2}\right) \text{.}
\end{equation*}
\end{remark}

\begin{remark}
\label{R4} \textit{Applications to half-space problems.} The general results
obtained for planar stationary half-space problems \cite{BCN-86, CGS-88,
BGS-06, Be-23d} for the discrete linearized equations obtained in \cite%
{BB-03, Be-08, Be-17}--cf. the results in \cite{Be-23d} applied to discrete
models--yield also for the Boltzmann equation for anyons--also for the
general collision operator $\left( \ref{q1}\right) $--presented here.
Indeed, consider the planar stationary system $\left( \ref{c4a}\right) $%
--for the linearized collision operator, possibly also with an inhomogeneous
term, see \cite{BB-03, Be-08, Be-17}-- for $x>0$. Assume the components $%
F_{i}\left( 0\right) $ of the distribution function at $x=0$ for which $%
p_{i}^{1}$ is positive to be given--possibly linearly depending on the
components of $F\left( 0\right) $ for which $p_{i}^{1}$ is negative. Then
results concerning the number of\ conditions needed for existence and/or
uniqueness of solutions--based on the signature of the restriction of the
quadratic form $\left\langle \cdot ,B\cdot \right\rangle $ to the kernel of $%
L$--in \cite{BB-03, Be-08, Be-17} can be applied. We stress that the results
presented in \cite{BB-03, Be-08, Be-17,Be-23d} can be applied also for the
Cauchy problem in the spatially homogenous case.
\end{remark}

\begin{remark}
\textit{Extensions to mixtures and/or multiple internal energy states.} The
results can--also for the general collision operator $\left( \ref{q1}\right) 
$--be extended to mixtures--including mixtures of anyons with different
fractional statistics, i.e., with different $\alpha \in \left[ 0,1\right] $%
--as well as particles with multiple energy levels, applying approaches
presented in \cite{BV-16, Be-16b, Be-18, Be-17}. Indeed, the key feature is
that to each component $F_{i}$ of the distribution function $F$ there will
be assigned not only a momentum $\mathbf{p}_{i}$, but also a species $a_{i}$
with species-dependent $\varepsilon _{a_{i}}=\alpha _{i}$ for $\alpha
_{i}\in \left[ 0,1\right] $, and possibly also an internal energy $I_{i}$.
The sets of admissible momentums--and possibly internal energies--may vary
for different species. At a formal level this extension seems merely to be a
matter of notation. Known normal models for discrete velocity models of the
Boltzmann equation, see \cite{BV-16, Be-16b, Be-17, Be-18} and references
therein, can be made use of (at least) in case of the collision operator $%
\left( \ref{l2}\right) $.
\end{remark}

\section{Conclusions \label{S5}}

A general discrete model of Boltzmann equation for anyons--or, Haldane
statistics--has been reviewed. As limiting cases the Nordheim-Boltzmann
equation for bosons and fermions appear.

The equilibrium distributions were characterized through a transcendental
equation and analytically solved for bosons, fermions, and semions. Trend to
equilibrium in the spatially homogeneous--were a certain equilibrium
distribution is approached--as well as the planar stationary case has been
shown.

The linearized collision operator was shown to be a symmetric, non-negative
operator, and its null-space--of the same dimension as the vector space of
the collision invariants--was characterized. Applications to the Cauchy
problem for linearized spatially homogeneous equation, as well as the
linearized steady half-space problem in a slab-symmetry, were then indicated
based on corresponding results for general discrete velocity models of the
linearized Boltzmann equation \cite{BB-03, Be-08, Be-10, Be-15, Be-17}.

Generalizations to more general collision operators, mixtures, particles
with different internal energy states, as well as assumptions of other
collision invariants have also been indicated and briefly discussed.

%
% ---- Bibliography ----
%

\end{document}